\documentclass{article}
\usepackage[utf8]{inputenc}
\usepackage[T1]{fontenc}
\usepackage[dvipsnames]{xcolor}
\usepackage{url}
\usepackage{booktabs}
\usepackage{amsfonts}
\usepackage{amsmath}
\usepackage{amssymb}
\usepackage{nicefrac}
\usepackage{tikz}
\usepackage{bm}
\usepackage{pgfplots}
\usepgfplotslibrary{patchplots}
\pgfplotsset{compat=1.18}
\usepackage{microtype}
\usepackage{tcolorbox}
\usepackage{enumitem}
\usepackage{graphicx}
\usepackage{subcaption}
\usepackage{algorithm}
\usepackage{algorithmic}
\usepackage[switch]{lineno}
\usepackage{dirtytalk}
\usepackage{mathtools}
\usepackage{amsthm}
\usepackage{thmtools}
\usepackage{thm-restate}
\usepackage{longtable}
\usepackage{authblk}
\usepackage{float}
\usepackage{natbib}
\usepackage[verbose=true,letterpaper]{geometry}
\usepackage[colorlinks = true, linkcolor=NavyBlue,citecolor=ForestGreen]{hyperref}
\usepackage[nameinlink]{cleveref}

\newtheorem{theorem}{Theorem}
\newtheorem{lemma}{Lemma}
\newtheorem{definition}{Definition}
\newtheorem{observation}{Observation}
\newtheorem{corollary}{Corollary}

\newcommand{\brset}{\mathrm{BR}}
\newcommand{\gm}{\text{A}}

\title{Do Not Discretize, Optimize: Almost Greedy Fictitious Play}

\author[1,2,3]{Evangelos Markakis}
\author[1,2]{Christodoulos Santorinaios}

\affil[1]{Athens University of Economics and Business, Greece}
\affil[2]{Archimedes, Athena Research Center, Greece}
\affil[3]{Input Output Group (IOG), Greece}

\affil[ ]{\texttt{\{markakis, santgchr\}@aueb.gr}}

\begin{document}
\maketitle
\begin{abstract}
    Our work revolves around Fictitious Play, one of the first iterative methods that is known to converge to a Nash equilibrium in zero-sum games. In recent years, there has been a revived interest, due to applications in various machine learning problems, which has motivated a line of work on its convergence properties and on proposing new variants of the initial algorithm. Our paper is along this direction and introduces one new variant, which we refer to as \textit{Almost Greedy Fictitious Play}. The proposed algorithm \textit{greedily} attempts to find the optimal stepsize at each iteration but its search space is constrained and includes \textit{almost} all the line between the cumulative mixed strategy and the current best response. Our main result is that the method achieves an instance dependent convergence rate of $\mathcal{O}(1/T)$ with respect to the duality gap. This matches the rate of Continuous Fictitious Play, and offers an alternative to discretization. We complement our theoretical findings with experiments that demonstrate the effectiveness of the method.
\end{abstract}

\section{Introduction}

Computing Nash equilibria in two-player bilinear zero-sum games is a cornerstone of Algorithmic Game Theory with a research trajectory spanning several decades. While zero-sum games are solvable via linear programming, such centralized approaches often prove computationally prohibitive in large-scale problems. This has led to a resurgence of interest in iterative methods suitable for machine learning applications such as formulating Generative Adversarial Networks (GANs), \cite{DBLP:conf/nips/GoodfellowPMXWOCB14}. In this context, the goal is to design simple iterative algorithms that converge efficiently to a Nash equilibrium, where no player has an incentive to deviate.

To this end, we revisit the classic Fictitious Play algorithm, \cite{brown1949some}. This is one of the first algorithms ever proposed for zero-sum games, based on a very intuitive idea. At every iteration each player selects a best response to the empirical average of the other player's history. The new strategy profile at each iteration is then updated to the new empirical average, where the probability of each pure strategy equals its selection frequency so far. Alternatively, we can think of the update rule as selecting a particular convex combination between the (mixed) strategy of the previous iteration and the pure best response strategy identified in the current iteration.

Motivated by the latter formulation of Fictitious Play, one can think of defining greedy variants as follows: if at iteration $t$, we are at the profile $(x_t, y_t)$ and the best response of the two players are $i, j$ respectively, we could choose in a greedy manner the step size $\eta_t$, which is the probability we assign to the best responses, while keeping a probability of $1-\eta_t$ for $(x_t, y_t)$. One natural metric for making such a choice is the Duality Gap, which represents the sum of regrets of the two players. Therefore, we could pick the stepsize so as to minimize the duality gap over all possible convex combinations, i.e., over $\eta_t\in [0,1]$. We refer to this variant as Greedy Fictitious Play.

In order to highlight some further intuition on this variant, we illustrate how Greedy Fictitious Play runs for the classic Rock-Paper-Scissors (RPS) game in \Cref{fig:agfp_rps}. We also show the stepsizes $\eta_t$ computed during the first 10 iterations in \Cref{tab:10steps}. 

\begin{figure}[H]
    \centering
    \includegraphics[scale=.7]{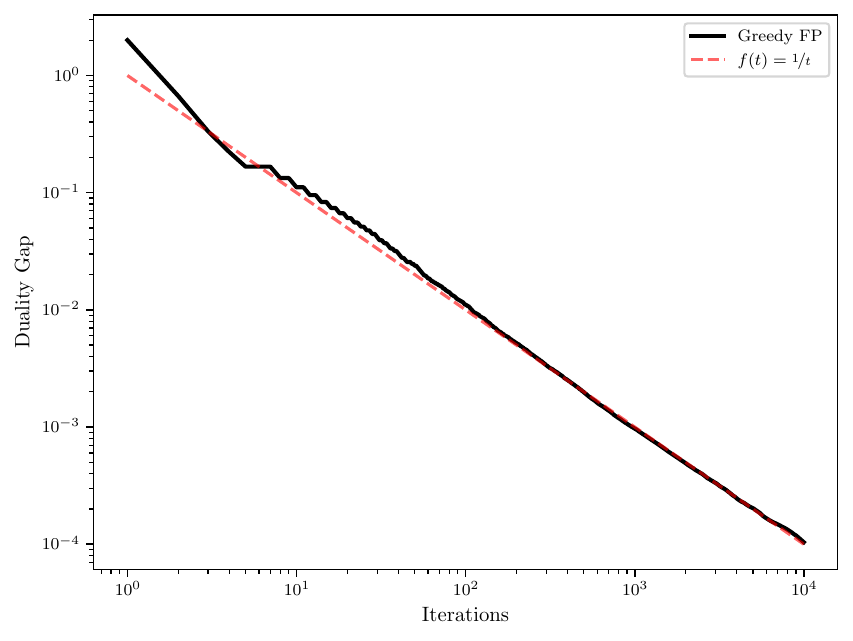}
    \caption{Greedy Fictitious Play in the Rock-Paper-Scissors game}
    \label{fig:agfp_rps}
\end{figure}

The encouraging news from this illustration is that the duality gap seems to be dropping at a rate of $1/t$ over time. Let us also take a closer look at the evolution of the step sizes. Under our initialization, both players start by playing Rock. Given this, at iteration 1, the theoretically optimal greedy choice is to set $\eta_1 = \nicefrac{2}{3}$. In this case, the next profiles would be $\nicefrac{1}{3}$ Rock and $\nicefrac{2}{3}$ Paper. Now the best response is not unique: it can be either Paper or Scissors. Assuming we are breaking the ties lexicographically, the best response would again be Paper for each player, resulting in $\eta_2 = 0$ and the algorithm cannot further progress.

However, due to numerical precision issues, the algorithm selects a slightly larger stepsize which leads to Scissors being selected as a unique best response, instead of Paper. This also occurs in the next few iterations. The next interesting thing to observe is at iteration 5. There, theoretically, the correct step size would have been $\eta_t = 0$, and the algorithm would get stuck. But, again due to the numerical precision, the computed step size is slightly positive, and this added noise eventually helps the method to get unstuck. 

\begin{table}[H]
\centering
\begin{tabular}{cc}
\toprule
$t$ & Greedy stepsize \\
\midrule
1  & 0.666666668 \\
2  & 0.500000004 \\
3  & 0.333333340 \\
4  & 0.250000004 \\
5  & $3.7 \times 10^{-9}$ \\
6  & $3.7 \times 10^{-9}$ \\
7  & 0.200000007 \\
8  & $3.7 \times 10^{-9}$ \\
9  & 0.166666675 \\
10 & $3.7 \times 10^{-9}$ \\
\bottomrule
\end{tabular}
\caption{Greedy stepsizes for the first 10 iterations of RPS}\label{tab:10steps}
\end{table}

This also occurs in subsequent iterations, as shown in \Cref{fig:agf_rps_steps}, and it motivates the question of whether we could introduce this added noise explicitly in the method.

\begin{figure}[H]
    \centering
    \includegraphics[scale=.7]{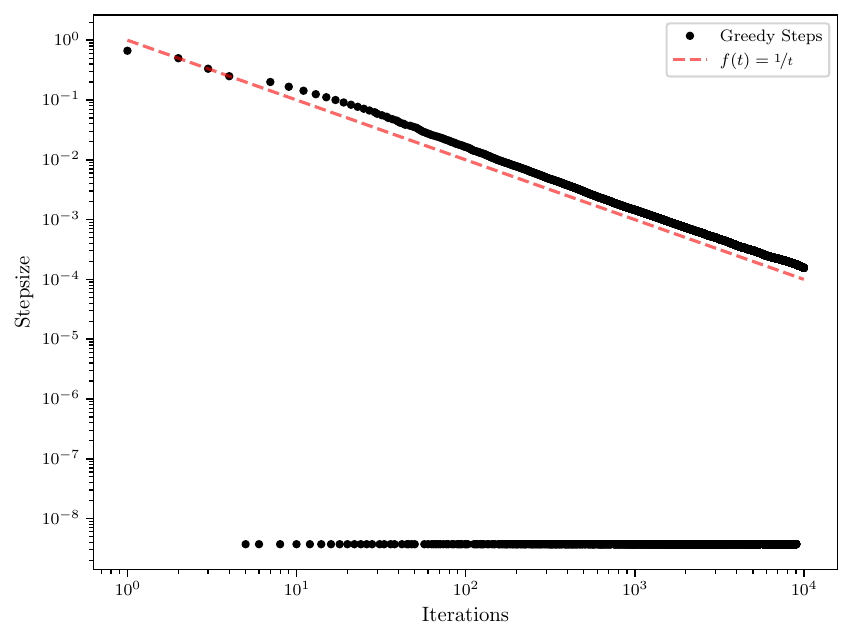}
    \caption{The steps the algorithm selected greedily}
    \label{fig:agf_rps_steps}
\end{figure}

\subsection{Contributions}

Motivated by the previous considerations, we study a slight variation of the algorithm discussed previously, which we term {\it Almost Greedy Fictitious Play} (AGFP). The only difference with Greedy Fictitious Play is that we do not optimize the selection of the stepsize $\eta_t$ over $[0,1]$, but instead over $[\delta,1]$, for some appropriately selected noise $\delta$.

We study this method both theoretically and experimentally. We first analyze the convergence rate of the method w.r.t. the standard duality gap metric. In particular, for a game with payoff matrix $\gm$, we prove that it converges to an approximate Nash equilibrium at a rate of $O(\frac{1}{\kappa_A T})$, where $\kappa_\gm$ is a condition-like number, formally defined in \Cref{def:cnum}. Such game-dependent quantities appear often in the convergence analysis of other algorithms (e.g. in the gradient descent family), where the general goal is to move from a $\mathcal{O}\big(\frac{1}{T}\big)$ rate to the improved $\mathcal{O}{(c^T)}$ (see for instance \cite{DBLP:conf/iclr/WeiLZL21}). In our case, this constant is necessary even for the sublinear rate.

We then turn to an experimental evaluation where our results convey an even better picture. We have tested our method in random games, and we first observe that it is a faster method than the classic FP algorithm. Furthermore, as demonstrated in \Cref{sec:exp}, the convergence rate seems to be in the order of $O(n/T)$, where $n$ is the dimension of the game (replaced by $\max\{n, m\}$ if the payoff matrix is $m\times n$). This matches the rate of Continuous Fictitious Play and offers an alternative to other related methods. Overall, we view our approach as a promising idea that motivates even further research on FP-type algorithms.    

\subsection{Related Work}

The literature on Fictitious Play is quite extensive and impossible to exhaustively list here. We discuss however below what we feel are the most relevant works.

As mentioned, the algorithm was originally proposed by \cite{brown1949some,brown1951iterative} and, in her seminal work, \cite{robinson1951iterative} proved the convergence of the dynamics for zero-sum games via an inductive argument. The proof implied by her rate was exponential in the dimensions of the payoff matrix. Shortly after, \cite{karlin2003mathematical} conjectured that the true rate is polynomial, and in fact $\mathcal{O}\big(\frac{1}{\sqrt{t}}\big)$. \cite{brandt2010rate} showed that exponential time may pass until the first time an equilibrium action is being played but this does not imply that the conjecture is false. \cite{daskalakis2014counter} disproved the conjecture by constructing an exponentially slow dynamic for the identity matrix, with their lower bound essentially matching Robinson's upper bound. Critically, their idea hinged on an adversarial tie-breaking rule. This left room to explore more natural tie-breaking rules, e.g. lexicographic, under which \cite{abernethy2021fast} proved that Karlin's conjecture holds, albeit only for diagonal matrices. Another class for which Karlin's conjecture is now verified is a generalization of the Rock-Paper-Scissors game, due to \cite{DBLP:conf/colt/LazarsfeldPSW25}. Very recently, \cite{wang2025tie} constructed a game for which Fictitious Play converges in $\mathcal{O}\big(t^{-1/3}\big)$.
Beyond the zero-sum world, a series of convergence or non-convergence results has been produced over the years, starting with \cite{miyasawa1961convergence} who proved convergence for $2\times 2$ games under a specific tie-breaking rule. The first non-convergence result is a $3 \times 3$ game due to \cite{shapley19641}. Three decades later, \cite{monderer1996a2} showed that without a correct tie-breaking even $2\times 2$ games may fail to exhibit convergence. A popular class of games were convergence was proved is potential games by \cite{monderer1996fictitious,monderer1996potential}, albeit under an exponential worst case rate, \cite{DBLP:conf/nips/PanageasPSC23,anagnostides2026doubly}.

Regarding variants of Fictitious Play, we should start our exposition with Continuous time Fictitious Play, which converges at a rate of $\mathcal{O}\big(\frac{1}{T}\big)$, as shown by \cite{harris1998rate}. The author claimed that, as a result, the discrete dynamics enjoy the same performance, but it turned out that discretizing the dynamics is non-trivial and slows down the algorithm (see the discussion by \cite{abernethy2021fast} as well). For recent progress of Continuous Fictitious Play see \cite{DBLP:journals/corr/OstrovskiS13} and the references therein. A second noteworthy variant is that of Stochastic Fictitious Play \citep{fudenberg1993learning}, which replaces the best response with a softmax function. The convergence of the method was established by \cite{hofbauer2002global}. We also point the reader to \cite{fudenberg1995consistency}, that defined the notion of consistency, in an attempt to explain how the selection of the best response affects the algorithm. For a survey of some of these results, the work of \cite{krishna1997learning} can be consulted. In the last couple of years, newly proposed variants include Anticipatory Fictitious Play \citep{DBLP:conf/ijcai/CloudWK23}, where each player best responds to the history of the game and the action that the opponent would have made in the next iteration of Fictitious Play, and Optimistic Fictitious Play \citep{lazarsfeld2025optimism}, that adapts the popular notion of Optimism into Fictitious Play. 

Regarding applications, Fictitious Play has been studied in the context of control theory: \cite{ma2017forecasting,marden2009joint} and in stochastic games: \cite{DBLP:conf/icml/BaudinL22,sayin2022fictitious,DBLP:conf/nips/BaudinL22}. It has also inspired algorithms in Reinforcement Learning, e.g. \cite{DBLP:conf/iclr/MullerORTPLHMLH20,DBLP:conf/icml/YangLLZZW18}. It has been adapted to extensive form games as Fictitious Self-Play \cite{heinrich2015fictitious}, which in turn was one of the components of the recent AI breakthrough that achieved Grandmaster level in the game of Starcraft \cite{vinyals2019grandmaster}. For a survey about the MARL setting we point to \cite{yang2020overview} and for a more recent work about Fictitious Play in multiplayer games to \cite{DBLP:conf/ijcai/0001DLM0Y022}.

We complete this section with a brief mention to the Frank-Wolfe method, \citep{frank1956algorithm}. There is an inherent connection between the algorithm and Fictitious Play, given that Fictitious Play can be seen as an instantiation of Frank-Wolfe with stepsize $\nicefrac{1}{(t+1)}$. In the same light, our proposed variant is connected with Frank-Wolfe with an optimal stepsize. We point the reader to \cite{DBLP:conf/aistats/GidelJL17} for more information, and stress that the lack of strong convexity that is present in the Frank-Wolfe analyses is precisely what complicates our proof in the sequel.

\section{Preliminaries}
\paragraph{Notation.} Let $[n] = \{1, \dots, n\}$, $\Delta_n = \{x \in \mathbb{R}^n: x_i \ge 0, \sum_{i=1}^n x_i = 1\}$ be the $n$-dimensional probability simplex and $e_i$ the $i$\textsuperscript{th} elementary basis vector of $\mathbb{R}^n$. A zero-sum game with $m$ (resp. $n$) pure strategies for the row (resp. column) player is described by a payoff matrix $\gm \in \mathbb{R}^{m\times n}$. Pure strategies correspond to basis vectors and mixed strategies to points in the players' respective simplices. Finally, a strategy profile is a point in $\Delta_m \times \Delta_n$.

\begin{definition}[Nash equilibrium, \cite{nash1951noncooperative}]
 A strategy profile $(x^*, y^*)$ is a Nash equilibrium of the zero-sum game $(\gm, -\gm )$ if and only if 
\[ 
(x^*)^\top \gm e_j \ge (x^*)^\top \gm y^*  \ge  e^\top_i \gm y^*\]  
for any $i \in [m]$ and $j \in [n]$.
\end{definition}
A Nash equilibrium is a profile where no player has a unilateral incentive to deviate. A common distance measure of a profile $(x,y)$ to a Nash equilibrium is the following.

\begin{definition}[Duality Gap]
    For a strategy profile $(x,y)$ the Duality Gap function is defined as
    \[\psi(x, y) = \max_{i \in [m]} e^\top_i \gm y - \min_{j \in [n]} x^\top \gm j.\]
\end{definition}
From Von Neumann's Duality theorem \citeyearpar{vonneumann1928theorie} it follows that the duality gap is nonnegative and, crucially, \[\psi(x,y) = 0 \iff (x,y) \text{ is a Nash equilibrium.} \]

We are also interested in approximate Nash equilibria, and we will use the standard version of approximation as defined below, assuming that the matrix entries are normalized in $[0,1]$.
\begin{definition}[$\epsilon$-Nash equilibrium]
A strategy profile $(x,y)$ is an $\epsilon$-Nash equilibrium (in short, $\epsilon$-NE), 
if and only if, for any $i\in [m], j\in [n],$ it holds that
\begin{equation*}
x^\top A y +\epsilon\geq e_i^\top A y, \text{ and, }
x^\top A y -\epsilon\leq x^\top A e_j.
\end{equation*}
\end{definition}

One can also easily see that when $\psi(x,y) \leq \epsilon$, then $(x,y)$  is an  $\epsilon$-Nash equilibrium.

\subsection{Fictitious Play}
Fictitious Play describes an iterative process where each player best responds to the empirical average of the other player's history. The final output is the averages of the actions. Alternatively, we can view Fictitious  Play as a dynamic that maintains a running average of the aforementioned best responses. We will present the second formalism, as it leads naturally to our approach. Let $(x_t, y_t)$ be the strategy profile produced at time $t$. Then at the next iteration, the update is performed as follows.
\begin{align*}
   &x_{t+1} = \frac{t}{t+1} x_t + \frac{1}{t + 1} e_{i_t},\quad i_t \in \arg\max_{i \in [m]} e_i^\top \gm y_t, \\
   &y_{t+1} = \frac{t}{t+1} y_t + \frac{1}{t + 1} e_{j_t},\quad j_t \in \arg\min_{j \in [n]} x_t^\top\gm e_j.
\end{align*}

\section{Almost Greedy Fictitious Play}

We begin by introducing our algorithm. The pseudocode can be seen as \Cref{alg:agfp}. The main idea behind our FP variant is that after we compute the pure best responses at a round $t$, each player plays a convex combination between the previous iterate and the best response (as in Fictitious Play). But the difference now is that the coefficients of the convex combination are selected by optimizing the duality gap function. In particular, the coefficient $\eta_t$ in the algorithm's pseudocode is the minimizer of the duality gap within the interval $[\delta, 1]$, where $\delta$ is an appropriately small but positive parameter (as implied by our analysis in the sequel) . A more greedy choice would be to minimize over the entire interval $[0, 1]$. However, we cannot guarantee a non-zero optimal step at each iteration. Hence, the need for some noise $\delta$, which can also be seen as a form of an implicit regularization.
Given this, we refer to our FP variant as Almost Greedy Fictitious Play.

\begin{algorithm}[tb]
    \caption{Almost-Greedy Fictitious Play}\label{alg:agfp}
    \textbf{Input}: Matrix $\gm$\\
    \textbf{Parameter}: Number of iterations $T$ and parameter $\delta$\\
    \textbf{Output}: Strategy profile $(x,y)$
    \begin{algorithmic}[1]
        \STATE{Pick an initial state $z_0 = (x_0, y_0)$.}
        \FOR{$t = 1, \dots, T$} 
        \STATE{Pick $(i,j) \in \brset(y_{t-1}) \times \brset(x_{t-1})$.}
            \STATE{$\eta_t = \arg\min\limits_{\eta\in[\delta,1]} \psi(z(\eta))$, where $z(\eta) = (1-\eta)z_{t-1} + \eta (e_i, e_j)$ }
        \STATE{$z_t = (1-\eta_t)z_{t-1} + \eta_t(e_i,e_j)$}
        \ENDFOR
        \STATE \textbf{return} $z_T = (x_T, y_T)$
    \end{algorithmic}
\end{algorithm}

\subsection{Convergence Analysis}

To prove that the algorithm eventually converges, first we must identify under which conditions the duality gap of the next iterate decreases. The following lemma highlights an important case where this happens.

\begin{lemma}[Decreasing step]\label{lem:decrease}
If $i_t$ (the best response of the row player against $y_t$) remains a best response against $y_{t+1}$ and respectively $j_t$ also remains a best response against $x_{t+1}$, the duality gap decreases. More specifically, it holds that
\[
\psi(x_{t+1}, y_{t+1}) = (1-\eta_t) \psi(x_t, y_t)
\]
where $\eta_t$ is the selected step at time $t$.
\end{lemma}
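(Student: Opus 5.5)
The plan is a direct computation that exploits the linearity of the payoffs in each player's strategy. Write $(x_{t+1},y_{t+1}) = (1-\eta_t)(x_t,y_t) + \eta_t(e_{i_t},e_{j_t})$, where $i_t\in\brset(y_t)$ and $j_t\in\brset(x_t)$. Under the hypothesis of the lemma, $i_t$ is still a maximizer of $e_i^\top \gm y_{t+1}$ and $j_t$ is still a minimizer of $x_{t+1}^\top \gm e_j$. The maximum and the minimum in the definition of $\psi(x_{t+1},y_{t+1})$ are therefore attained at $i_t$ and $j_t$, which gives
\[
\psi(x_{t+1},y_{t+1}) = e_{i_t}^\top \gm y_{t+1} - x_{t+1}^\top \gm e_{j_t}.
\]

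Next I expand both terms by linearity:
\[
e_{i_t}^\top \gm y_{t+1} = (1-\eta_t)\, e_{i_t}^\top \gm y_t + \eta_t\, e_{i_t}^\top \gm e_{j_t},
\]
\[
x_{t+1}^\top \gm e_{j_t} = (1-\eta_t)\, x_t^\top \gm e_{j_t} + \eta_t\, e_{i_t}^\top \gm e_{j_t}.
\]
When I subtract, the cross term $\eta_t e_{i_t}^\top \gm e_{j_t}$ cancels. What remains is $(1-\eta_t)\bigl(e_{i_t}^\top \gm y_t - x_t^\top \gm e_{j_t}\bigr)$.

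Finally, because $i_t$ and $j_t$ are the best responses at time $t$, the bracket equals $\max_i e_i^\top \gm y_t - \min_j x_t^\top \gm e_j = \psi(x_t,y_t)$. This yields the claimed identity. Since $\eta_t \ge \delta > 0$ and $\psi\ge 0$, the gap strictly decreases whenever it is positive, and it stays at $0$ otherwise.

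There is no real obstacle here. The only step needing care is the first one: the persistence hypothesis must be used for the \emph{new} iterate, so that the max and min in $\psi(x_{t+1},y_{t+1})$ are attained at the same indices $i_t, j_t$ as at time $t$. This is exactly what makes the cross terms cancel. Without it one only gets an inequality.
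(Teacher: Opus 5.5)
Your proposal is correct and follows essentially the same argument as the paper: use the persistence hypothesis to evaluate the max and min at the fixed indices $i_t, j_t$, then expand by linearity so that the $\gm_{i_tj_t}$ terms cancel. The only difference is cosmetic: you compute $\psi(x_{t+1},y_{t+1})$ directly, while the paper computes the difference $\psi(x_{t+1},y_{t+1})-\psi(x_t,y_t) = -\eta_t\psi(x_t,y_t)$.
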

\begin{proof}
We calculate
\begin{align*}
\psi(x_{t+1}, y_{t+1}) - \psi(x_{t}, y_{t}) &=
\left(\max_i e_i^\top \gm y_{t+1} - \min_j x_{t+1}^\top \gm e_j\right) - \left(\max_i e_i^\top \gm y_{t} + \min_j x_{t}^\top \gm e_j\right) \\
&= e_{i_t}^\top \gm y_{t+1} - x_{t+1}^\top \gm e_{j_t} - e_{i_t}^\top \gm y_{t} + x_{t}^\top \gm e_{j_t} \\
&= e_{i_t}^\top \gm (y_{t+1} - y_t) - (x_{t+1} - x_t)^\top \gm e_{j_t} \\
&= \eta_t e_{i_t}^\top \gm (e_{j_t} - y_t) - \eta_t (e_{i_t} - x_t)^\top \gm e_{j_t}  \\
&= \eta_t (\gm_{i_tj_t} - e_{i_t}^\top \gm y_t - \gm_{i_tj_t} + x_t^\top \gm e_{j_t}) \\
&= -\eta_t (\max_i e_i^\top \gm y_t - \min_j x_t^\top \gm e_j ) \\
&= -\eta_t \psi(x_t, y_t).
\end{align*}
The proof is completed by rearranging the terms.
\end{proof}

The next lemma shows a type of updates under which a pure strategy remains a best response when going from iteration $t$ to $t+1$, if it was a unique best response at $t$.

\begin{lemma}[Stability of unique best responses]\label{lem:uniquestab}
 Let $j$ be a pure strategy of the column player. If $i^*$ is the unique best response against a strategy $y$, then there exists a $\bar{\theta} > 0$ such that $i^*$ is a best response against $y^\prime = (1-\theta)y + \theta e_j$ for any $\theta \in [0, \bar{\theta}]$. 
\end{lemma}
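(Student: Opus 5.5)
The plan is a continuity/linearity argument on the row player's payoffs, which are affine in $\theta$. For each pure strategy $i\in[m]$ define
\[
f_i(\theta) = e_i^\top \gm y' = (1-\theta)\, e_i^\top \gm y + \theta\, \gm_{ij}.
\]
It suffices to find $\bar\theta>0$ such that $f_{i^*}(\theta)\ge f_i(\theta)$ for every $i$ and every $\theta\in[0,\bar\theta]$.

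First I will use uniqueness of the best response to extract a strictly positive margin,
\[
\Delta = e_{i^*}^\top \gm y - \max_{i\neq i^*} e_i^\top \gm y > 0 .
\]
This is well defined since $[m]$ is finite; if $m=1$ the claim is trivial. Next, for any $i\neq i^*$, I will write the difference
\[
f_{i^*}(\theta) - f_i(\theta) = (1-\theta)\big(e_{i^*}^\top \gm y - e_i^\top \gm y\big) + \theta\big(\gm_{i^*j} - \gm_{ij}\big)
\ge (1-\theta)\Delta - \theta D,
\]
where $D = \max_{i}\, |\gm_{i^*j}-\gm_{ij}|$. If entries are normalized to $[0,1]$, then $D\le 1$, but only finiteness of $D$ is needed.

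Finally, I will choose $\bar\theta$ so that the right-hand side is nonnegative on $[0,\bar\theta]$. The condition $(1-\theta)\Delta \ge \theta D$ is equivalent to $\theta\le \Delta/(\Delta+D)$, so I set
\[
\bar\theta = \frac{\Delta}{\Delta + D}.
\]
This is positive and at most $1$ (if $D=0$, then $\bar\theta=1$). For every $\theta\in[0,\bar\theta]$ the difference is nonnegative for all $i\neq i^*$, so $i^*$ is a best response against $y'$; it is even the unique one for $\theta<\bar\theta$.

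There is no real obstacle here. The only point needing care is that the margin $\Delta$ is strictly positive, which is exactly where uniqueness of the best response enters. Without uniqueness, a tied strategy $i$ with $\gm_{ij}>\gm_{i^*j}$ would overtake $i^*$ for every $\theta>0$, and the lemma would fail.
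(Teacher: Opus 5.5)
Your proof is correct and is essentially the paper's argument: both write $e_{i^*}^\top \gm y' - e_i^\top \gm y'$ as an affine function of $\theta$ that is strictly positive at $\theta=0$ by uniqueness, and then bound $\theta$ so that it stays nonnegative. The only difference is minor. The paper computes an exact threshold $\bar{\theta}_i(i^*)$ for each competitor $i$ and takes the minimum, whereas you use the uniform margin $\Delta$ and the uniform bound $D$, which gives the explicit but slightly smaller $\bar\theta=\Delta/(\Delta+D)$; for entries in $[0,1]$ this is at least $\Delta/2$, which is the form the proof of \Cref{th:main} actually uses.
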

\begin{proof}
Fix $j$. We will argue by comparing the payoffs of the strategies $i^*$ and $i$ of the row player, against strategy $y^\prime = (1-\theta_i)y + \theta_i e_j$, for any $i\in [n]$. For clarity, we first carry out the analysis parametrically, using $\theta_i$ in the definition of $y^\prime$, when we compare with strategy $i$. We will later extract an upper bound for all the $\theta_i$'s.
\begin{align}
e_{i^*}^\top \gm  y^\prime - e_i^\top \gm y^\prime &=
(1-\theta_i) e_{i^*}^\top \gm y + \theta_i \gm_{i^*j}
 - [(1-\theta_i) e_i^\top \gm y + \theta_i \gm_{ij}] \nonumber \\ 
&= (1-\theta_i) (e_{i^*}^\top \gm y - e_i^\top \gm y) + \theta_i (\gm_{i^*j} - \gm_{ij}) \nonumber \implies \\
e_{i^*}^\top \gm  y^\prime - e_i^\top \gm y^\prime &= \theta_i\underbrace{[\gm_{i^*j} - \gm_{ij} - e_{i^*}^\top \gm y + e_i^\top \gm y]}_{a_{ij}}
+\underbrace{e_{i^*}^\top \gm y - e_i^\top \gm y}_{b_i} \tag{$\star$}\label{eq:star}
\end{align}
where $b_i>0$ regardless of $i$, since $i^*$ is the unique best response against $y$. Now, if $a_{ij}$ is also nonnegative for any $i$ we have established that the right-hand side expression is positive, hence $i^*$ is a better response than $i$ against $y^\prime$. In other words, when $a_{ij} \ge 0$, $i^*$ is a strictly better response than $i$ for any $\theta_i \in [0,1]$. Now, assume that for some $i$, we have $a_{ij} < 0$. Then $i^*$ is at least as good a response as $i$, when $a_{ij} \theta_i + b_i \geq 0$. This implies that it suffices to have $\theta_i$ upper bounded as follows.
\[
 \theta_i \leq \frac{e_{i^*}^\top \gm y - e_i^\top \gm y}{\gm_{ij} - \gm_{i^*j} - e_{i}^\top \gm y + e_{i^*}^\top \gm y} = \bar{\theta}_i(i^*).
\]
 
Note that the quantity $\bar{\theta}_i(i^*)$ is positive since we assumed that $a_{ij}<0$, and since $i^*$ is a unique best response. For completeness, when $a_{ij} \ge 0$, we define $\bar{\theta}_i(i^*) = 1$. To complete the proof, we set $\bar{\theta}(i^*) = \min_i \bar{\theta}_i(i^*)$. Then, $i^*$ remains a best response against $y^\prime$, for any $\theta \in [0, \bar{\theta}(i^*)]$, potentially tied with other strategies.
\end{proof}

The above lemma offers a crucial insight in identifying situations where \Cref{lem:decrease} can be exploited but it is too strict, as it concerns unique best responses. To relax it, we need the following definition, which captures what we will target when best responses will not be unique.

\begin{definition}[Linearly indistinguishable   responses]\label{def:lind}
Let $y_1, y_2$ be strategies of the column player. We say that two pure strategies $i_1, i_2$ of the row player are linearly indistinguishable (l. ind.) w.r.t. $y_1, y_2$, if 
    \[e_{i_1}^\top \gm y_1 = e_{i_2}^\top \gm y_1 \text{ and } e_{i_1}^\top \gm y_2 = e_{i_2}^\top \gm y_2.\]
\end{definition}

\Cref{def:lind} trivially implies that the strategies $i_1, i_2$ produce the same payoff against any strategy on the line between $y_1$ and $y_2$. The next observation essentially extends \Cref{lem:uniquestab} for the case of non-unique best responses and guides us for the choice of $\delta$ in \Cref{alg:agfp}. 

\begin{observation}[Stability of l.ind. best responses]\label{obs:stab} Consider a (possibly mixed) strategy $y$ and a pure strategy $j$ of the column player. If there are multiple best responses against $y$ but they are linearly indistinguishable w.r.t $y$ and $j$, then there exist a $\bar{\theta} > 0$ such that they are best responses against any convex combination $y^\prime= (1-\theta)y + \theta e_j$ for $\theta \in [0, \bar{\theta}]$. 
\end{observation}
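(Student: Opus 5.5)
Let $B = \brset(y)$ denote the set of best responses against $y$, and assume $|B|\ge 2$ with all elements of $B$ pairwise linearly indistinguishable w.r.t. $y$ and $e_j$ in the sense of \Cref{def:lind}. The plan is to reduce to the argument of \Cref{lem:uniquestab}. We treat the whole class $B$ as a single ``super-strategy''. First, linear indistinguishability gives, for all $i_1,i_2\in B$, $e_{i_1}^\top \gm y = e_{i_2}^\top \gm y$ and $\gm_{i_1 j} = \gm_{i_2 j}$. By linearity, for every $\theta\in[0,1]$,
\[
e_{i_1}^\top \gm y' = (1-\theta)e_{i_1}^\top \gm y + \theta \gm_{i_1 j} = e_{i_2}^\top \gm y',
\]
so all members of $B$ earn the same payoff against every $y'$ on the segment. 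It therefore suffices to show that one fixed representative $i^*\in B$ is a best response against $y'$ for small $\theta$. The other members of $B$ then tie with it and are best responses as well.

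Second, we compare $i^*$ against each $i\notin B$ exactly as in \eqref{eq:star}:
\[
e_{i^*}^\top \gm y' - e_i^\top \gm y' = \theta a_{ij} + b_i,
\]
with $a_{ij}$ and $b_i$ defined as there. Since $i\notin B$ and $i^*\in B$, we have $b_i = e_{i^*}^\top\gm y - e_i^\top \gm y > 0$. This strict positivity for strategies outside $B$ is the only place where uniqueness was used in \Cref{lem:uniquestab}, and it survives in our setting. Define $\bar\theta_i = 1$ if $a_{ij}\ge 0$, and $\bar\theta_i = b_i/(-a_{ij}) > 0$ otherwise. Then set $\bar\theta = \min_{i\notin B}\bar\theta_i$. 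This is a minimum of finitely many positive numbers, hence $\bar\theta>0$; if $B=[m]$ the claim is trivial and we take $\bar\theta=1$. For $\theta\in[0,\bar\theta]$, we get $e_{i^*}^\top \gm y' \ge e_i^\top \gm y'$ for all $i\notin B$. Combined with the ties inside $B$, every element of $B$ is a best response against $y'$.

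The only delicate point is checking that the pairwise condition in \Cref{def:lind} suffices for the whole set $B$. This is immediate because equality of payoffs is transitive. The key observation that payoffs agree against $e_j$ as well as $y$ is exactly what prevents some $i\in B$ from overtaking another member of $B$ as $\theta$ grows. In particular, without that hypothesis, $B$ could split for every $\theta>0$, and this is why the assumption is needed.
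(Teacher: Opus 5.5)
Your proof is correct and follows essentially the same route as the paper: you compare a best response against each $i\notin\brset(y)$ via \eqref{eq:star} using $b_i>0$ and take the minimum of the resulting thresholds. You also observe that $a_{ij}=b_i=0$ inside $\brset(y)$, so the linearly indistinguishable responses tie along the whole segment. The only cosmetic difference is that you fix one representative and let the ties carry the rest, whereas the paper computes a threshold for each best response and takes the minimum; because of the ties, that minimum is the same value.
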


\begin{proof}
Let $i_1^*, i_2^*$ be two l. ind. best responses w.r.t. $y$ and $j$. We follow the same reasoning as in the proof of \Cref{lem:uniquestab}, and compare $e_{i_1^*}^\top \gm  y^\prime$ against $e_{i}^\top \gm  y^\prime$ for any other pure strategy $i$.  For $i\not\in BR(y)$, we can follow exactly the proof of \Cref{lem:uniquestab} and obtain a value $\bar{\theta}_i(i_1^*)$ so that $i_1^*$ is a better response than $i$ against the convex combination $y^\prime =  (1-\theta)y + \theta e_j$ when $\theta \in [0, \bar{\theta}_i(i_1^*)]$. Set now $\bar{\theta}(i_1^*) = \min_{i\not\in BR(y)} \bar{\theta}_i(i_1^*)$. Finally for $i\in BR(y)$, since $i$ and $i_1^*$ are linearly indistinguishable w.r.t. $y$ and $j$, observe that the terms $a_{ij}$ and $b_i$ within the proof of \Cref{lem:uniquestab} are 0, therefore all these strategies yield the same payoff against $y^\prime$. Hence, we can conclude that $i_1^*$ is a best response to $y'$ for any $\theta \in [0, \bar{\theta}(i_1^*)]$. Using the same arguments, there exists $\bar{\theta}(i_2^*)$ so that $i_2^*$ satisfies the same property. We can now take $\bar{\theta} = \min \{\bar{\theta}(i_1^*), \bar{\theta}(i_2^*) \}$, and we are done. 
\end{proof}

Next, we deal with the case where the optimal stepsize would have been in $[0, \delta)$, but instead we select $\delta$. We call such a stepsize a tie-breaking one, because its role is not to reduce the duality gap but merely to allow the algorithm to find a good direction where the best responses are not unique (or l. ind.).

\begin{lemma}[Tie-breaking step]\label{lem:tie}
There is a sufficiently small $\delta$ such that the following holds: Suppose we run \Cref{alg:agfp} using $\delta$, and at time $t$ it sets $\eta_t = \delta$ to produce $(x_t, y_t)$. Let $j$ be the best response to $y_{t-1}$ found at time $t$. Then at time $t+1$, if there are multiple best responses of the row player to $y_t$, then they are linearly indistinguishable w.r.t. $y_t$ and $j$. Similarly for the column player.
\end{lemma}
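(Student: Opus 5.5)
The argument is a finiteness/genericity argument on the payoff differences along the segment $y_t(\theta) = (1-\theta)y_{t-1} + \theta e_j$, with $y_{t-1}$ held fixed. I treat the row player; the column player is symmetric. For any two pure strategies $i_1,i_2$ of the row player, the payoff difference
\[
g_{i_1 i_2}(\theta) = (e_{i_1}-e_{i_2})^\top \gm \bigl((1-\theta)y_{t-1} + \theta e_j\bigr)
\]
is affine in $\theta$. So either it vanishes identically, which means $i_1,i_2$ are linearly indistinguishable w.r.t.\ $y_{t-1}$ and $j$, or it has at most one zero. In the second case let $\theta_{i_1i_2}>0$ be its positive root if one exists, and $+\infty$ otherwise. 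Choose $\delta$ strictly smaller than every such positive root. Then at $\theta=\delta$ any two tied pure strategies have $g_{i_1i_2}\equiv 0$ on the whole line.

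Next I transfer this from the pair $(y_{t-1}, j)$ to the pair $(y_t, j)$ required by the statement. Since $y_t$ lies on the segment between $y_{t-1}$ and $e_j$, the line through $y_t$ and $e_j$ is the same line. So if $i_1,i_2$ are both best responses to $y_t$ with $\theta=\delta$, then $g_{i_1i_2}(\delta)=0$. By the choice of $\delta$ this forces $g\equiv 0$, hence equal payoffs against $y_t$ and against $e_j$. That is exactly linear indistinguishability w.r.t.\ $y_t$ and $j$ in the sense of \Cref{def:lind}. The column player is handled by the same argument with the differences $x_t^\top \gm (e_{j_1}-e_{j_2})$ along the segment from $x_{t-1}$ to $e_i$.

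The main obstacle is that $\delta$ must be fixed before running \Cref{alg:agfp}, while the roots $\theta_{i_1i_2}$ depend on the iterate $y_{t-1}$. Over the whole run these iterates may have arbitrarily small positive roots. I would first try the dependency the statement seems to allow: "sufficiently small $\delta$" chosen relative to the current state. In that reading $\delta$ is taken below the minimum over the finitely many pairs $(i_1,i_2)$ and directions $j$ of the positive roots at the current iterate, and that minimum is positive because it is a minimum of finitely many positive numbers.

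If instead a uniform $\delta$ is needed, I would argue by genericity. The positive roots are rational functions of the iterate's coordinates, and the iterates form a finite set for fixed $T$. Outside a finite set of bad values, any $\delta$ avoids all roots. Then the strict inequalities $g\neq 0$ at $\theta=\delta$ for non-identically-zero pairs hold, which is what the argument needs. So the plan is to state the lemma with $\delta$ below all positive roots encountered, or equivalently avoiding a finite exceptional set, and to note that this set is finite.
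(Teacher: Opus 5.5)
Your main argument, reading (a) with $\delta$ depending on the current iterate, is correct and is essentially the paper's. The paper proceeds in two steps. First, a continuity argument as in \Cref{lem:uniquestab} gives $\brset(y_t)\subseteq\brset(y_{t-1})$ for small $\delta$, so two best responses tied at $y_t$ are also tied at $y_{t-1}$. Second, their payoff difference at $y_t$ is then exactly $\delta(\gm_{i_1j}-\gm_{i_2j})$, which forces $\gm_{i_1j}=\gm_{i_2j}$. Your single affine-root argument merges the two steps and proves slightly more: every tie at $y_t$, not only among best responses, is linearly indistinguishable. You also correctly note that, for a fixed $y_{t-1}$, it is enough for $\delta$ to avoid finitely many values rather than lie below them. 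The dependence of $\delta$ on $y_{t-1}$ is the same one the paper's proof carries; the paper later makes it explicit through the adaptive $\delta_t$ of its corollary.

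Your genericity fallback for a uniform $\delta$ does not work as written. For fixed $T$ the iterates $y_0,\dots,y_T$ are not a fixed finite set; they are functions of $\delta$, through earlier steps with $\eta_s=\delta$ and through the constrained $\arg\min$ over $[\delta,1]$. Changing $\delta$ to dodge the bad roots therefore also moves the roots. To conclude that only finitely many $\delta$ are bad, you would need an extra argument. For example, you could show that on each combinatorial branch of the run, the quantity $g_{i_1i_2}(\delta)$ evaluated at $y_{t-1}(\delta)$ is a rational function of $\delta$ that does not vanish identically whenever $\gm_{i_1j}\neq\gm_{i_2j}$. Without that, either drop the uniform claim or state it in the adaptive form, as the paper does.
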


\begin{proof}
    Let $(x_{t-1},y_{t-1})$ be a pair of iterates where the minimizer over the line was actually less than $\delta$, and instead \Cref{alg:agfp} selected $\delta$ and let $(x_t, y_t)$ be the next pair, produced based on the choice of best responses $i,j$, so that $j\in BR(x_{t-1})$. This means that $y_t = (1-\delta)y_{t-1} + \delta e_j$. Let us focus on the row player. Assume that $i_1, i_2$ are best responses against $y_t$. We claim that there exists a sufficiently small $\delta>0$ such that $i_1, i_2$ are also best responses against $y_{t-1}$. This follows by arguments similar to the proof of \Cref{lem:uniquestab}.
    Suppose now that $i_1$ and $i_2$ are not l.ind. w.r.t.  $y_{t-1}$ and $j$. Note that since $i_1, i_2$ are best responses against $y$  and they are not l. ind., it must hold that $\gm_{i_1 j} - \gm_{i_2j}\neq 0$. If we set in \Cref{eq:star}  $y^\prime = y_t, i^* = i_1$, and $i=i_2$, we obtain
    \[
    e^\top_{i_1} \gm y_t - e^\top_{i_2} \gm y_t = \delta(\gm_{i_1 j} - \gm_{i_2j}) \ne 0.
    \]
    This is a contradiction, since $i_1, i_2$ are both best responses to $y_t$, hence, there is no pair of non l.ind. strategies in $\brset(y_t)$.\end{proof}

The issue now is that while the tie-breaking step guarantees us a drop direction in the next iteration, it may have increased the duality gap. The next lemma describes the trade-off and it essentially boils down to the fact that as long as the duality gap is large, the effect of $\delta$ will be negligible.

\begin{lemma}[Bounded increase after decrease]\label{lem:bndinc}
    If at time $t$ AGFP did not take a decreasing step, there exists a sufficiently small $\delta$ such that $\psi(z_{t+1}) < \psi(z_{t-1})$.
\end{lemma}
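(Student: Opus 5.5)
Our plan is to combine \Cref{lem:tie} with \Cref{lem:decrease}, and to control the loss caused by the forced step $\eta_t=\delta$ through the Lipschitz continuity of $\psi$.

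First we interpret the hypothesis. At time $t$ AGFP did not take a decreasing step, so the unconstrained minimizer over the line lay in $[0,\delta)$ and the algorithm selected $\eta_t=\delta$. This is exactly the tie-breaking situation of \Cref{lem:tie}. We bound the possible increase caused by this step. Since $\psi$ is a maximum minus a minimum of linear functions with entries in $[0,1]$, it is $2$-Lipschitz in the $\ell_1$ norm on $\Delta_m\times\Delta_n$. The step moves each player's strategy by $\delta\|e_i-x_{t-1}\|_1\le 2\delta$ (and similarly for $y$). Hence
\[
\psi(z_t)\le \psi(z_{t-1})+c\,\delta
\]
for an absolute constant $c$ (about $8$).

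Second, we argue that the step at time $t+1$ is decreasing. Let $i',j'$ be the best responses chosen at time $t+1$. By \Cref{lem:tie}, for $\delta$ small enough, all row best responses to $y_t$ are linearly indistinguishable w.r.t.\ $y_t$ and $j'$, and likewise for the column player. (If the lemma is only stated w.r.t.\ the previous $j$, we will reuse its argument with $j'$: any two best responses with $\gm_{i_1j'}\neq \gm_{i_2j'}$ would be separated after a small move, and the finitely many such configurations give a uniform $\delta$.) \Cref{obs:stab} then gives $\bar\theta>0$ such that $i'$ stays a best response against $(1-\theta)y_t+\theta e_{j'}$ for all $\theta\le\bar\theta$, and symmetrically for $j'$. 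On this range \Cref{lem:decrease} applies, so
\[
\psi(z(\theta))=(1-\theta)\psi(z_t).
\]
Because $\eta_{t+1}$ minimizes $\psi$ over $[\delta,1]$, taking $\theta=\max(\delta,\bar\theta)$ (we need $\delta\le\bar\theta$) yields
\[
\psi(z_{t+1})\le(1-\bar\theta)\psi(z_t)\le(1-\bar\theta)\bigl(\psi(z_{t-1})+c\delta\bigr).
\]
This is strictly below $\psi(z_{t-1})$ whenever $c\delta(1-\bar\theta)<\bar\theta\,\psi(z_{t-1})$, which holds for small enough $\delta$ as long as $z_{t-1}$ is not an equilibrium.

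The main obstacle is uniformity: $\bar\theta$ depends on the current iterate, and in the formula of \Cref{lem:uniquestab} it can be tiny when some non-best response is nearly tied. Meanwhile $\delta$ has to be chosen before running the algorithm, and $\bar\theta$ itself depends on $\delta$ through $y_t$. Since the statement only asserts the existence of a sufficiently small $\delta$, I would first prove it in the local form: fix $z_{t-1}$ and the best responses, then let $\delta\to0$. As $\delta\to0$ we have $y_t\to y_{t-1}$, so the gaps $b_i$ of non-best responses stay bounded away from zero. Because $\bar\theta_i=b_i/(-a_{ij})$ with $|a_{ij}|\le 2$, this gives $\bar\theta\ge\min_{i\notin\brset}b_i/2$, a bound independent of $\delta$. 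I would then choose $\delta<\min\{\bar\theta,\;\bar\theta\,\psi(z_{t-1})/c\}$. A global choice would need a lower bound on these gaps in terms of the game, presumably the quantity $\kappa_\gm$ of \Cref{def:cnum}; I would leave that to the rate analysis.
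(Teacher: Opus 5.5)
Your plan is the same as the paper's. You pair the forced $\delta$-step with a decreasing step obtained from \Cref{lem:tie}, \Cref{obs:stab} and \Cref{lem:decrease}, charge the $\delta$-step $O(\delta)$, and take $\delta$ small relative to (step size)$\times\psi$. You bound the $\delta$-step's cost by Lipschitz continuity. The paper instead puts the decreasing step first and uses convexity, $\psi(z_{t+1})\le(1-\delta)\psi(z_t)+\delta\psi(e_i,e_j)\le\psi(z_t)+2\delta$, requiring $\delta\le\eta_{t-1}\psi(z_{t-1})/4$. The problem is that both repairs you propose at the delicate points are false, and these are exactly the places where an extra idea is needed.

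First, \Cref{lem:tie} gives linear indistinguishability only w.r.t.\ the column $j$ used in the $\delta$-step, and its argument cannot be rerun with $j'$. The $\delta$-move pushes $y$ toward $e_j$, so it separates row strategies only by their entries in column $j$. Two best responses with $\gm_{i_1j}=\gm_{i_2j}$ but $\gm_{i_1j'}<\gm_{i_2j'}$ stay exactly tied at $y_t$ for every $\delta$. If $i'=i_1$ is chosen, it stops being a best response for every $\theta>0$, so $\bar\theta=0$.

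Second, the gaps $b_k$ do not all stay away from zero as $\delta\to0$. For $k\in\brset(y_{t-1})\setminus\brset(y_t)$, i.e.\ a strategy eliminated by the tie-break itself, $b_k=\delta(\gm_{i'j}-\gm_{kj})$. Nothing in your argument excludes $\gm_{kj'}>\gm_{i'j'}$. In that case $\bar\theta_k=b_k/(\gm_{kj'}-\gm_{i'j'}+b_k)=\Theta(\delta)$, and your condition $c\delta(1-\bar\theta)<\bar\theta\psi(z_{t-1})$ becomes a $\delta$-free requirement, roughly $\psi(z_{t-1})>c(\gm_{kj'}-\gm_{i'j'})/(\gm_{i'j}-\gm_{kj})$, which fails near equilibrium.

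Both problems disappear on a side whose direction is unchanged ($j'=j$ for the row player, $i'=i$ for the column player). However, a genuine tie-break always changes at least one of them. If $i\in\brset(y_t)$ and $j\in\brset(x_t)$, then for small $\delta$ both remain best responses along the whole segment $\eta\in[0,\delta]$. \Cref{lem:decrease} then gives $\psi(z(\eta))=(1-\eta)\psi(z_{t-1})$ there, so the line minimizer could not have been in $[0,\delta)$. What is missing is control over how the new direction $(i',j')$ interacts with the strategies that the $\delta$-step just separated.

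The paper's proof does not supply this either. It invokes \Cref{lem:tie} as if it gave indistinguishability w.r.t.\ the next direction, and its condition $\delta\le\eta_{t-1}\psi(z_{t-1})/4$ assumes the decreasing step is not itself $O(\delta)$. So your route is no weaker than the paper's, but the fixes you propose do not close the gap.
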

\begin{proof}
 If at time $t-1$ the algorithm did not make a decreasing step, it has performed a tie-breaking one. Then \Cref{lem:tie} and the choice of $\delta$ guarantees the decrease condition for time $t$, via either \Cref{lem:decrease} or \Cref{obs:stab}, depending on whether the outcome of the tie-breaking leads to a unique best response or multiple l. ind. responses.
 Hence, the step at time $t$ is decreasing, i.e. $\psi(z_t) = (1- \eta_{t-1}) \psi(z_{t-1})$. By the convexity of $\psi$ we have that 
    \[\psi(z_{t+1}) \le (1-\delta)\psi(z_t) + \delta \psi(e_{i_t},e_{j_t})  \le \psi(z_t) + 2\delta\]
    where we used the bound $\psi(e_{i_t},e_{j_t}) \le 2$. Combining the two relations and as long as $\delta \le \frac{\eta_{t-1}\psi(z_{t-1})}{4}$ we have
    \[
    \psi(z_{t+1}) \le \Big(1 - \frac{\eta_{t-1}}{2}\Big) \psi(z_{t-1})
    \]
    which completes the proof.\end{proof}

If we were minimizing a strongly-convex strongly-concave function, we would have had all the ingredients to obtain the claimed rate, by relating $\eta$ with $\psi$ via the strong convexity constants. Unfortunately, this is not the case for our problem. To make progress, we define a quantity that depends on the matrix $\gm$. 

\begin{definition}[Condition-like number]\label{def:cnum}
    For a zero-sum game $(\gm, -\gm)$ we define 
    \begin{gather*}
    \kappa_\gm = \sup_{(x,y) \not\in \text{NE}}\frac{\min\{\ell_x, \ell_y\}}{\psi(x,y)}
    \end{gather*}
    where for a point $(x,y)$ that is not a Nash equilibrium 
    \begin{gather*}
        \ell_x = \max_i e_i^\top \gm y -  \max_{i\not\in\brset(y)} e_i^\top \gm y \text{ and }\\
        \ell_y = \min_{j\not\in\brset(x)} x^\top \gm e_j - \min _j x^\top \gm e_j.
    \end{gather*}
\end{definition}

Essentially, this constant is a local measure that relates the separation between best and non-best responses and the duality gap. While we were unable to bound this quantity, we note that it cannot go to zero as the duality gap decreases, at least in non-degenerate games. This follows from the strict-complementarity slackness of \cite{goldman1956theory} and indicates that the convergence could be accelerating as we approach the equilibrium, since strict slackness does not apply when far from it.

\begin{theorem}[Last-iterate convergence]\label{th:main}
    For sufficiently small $\delta$, AGFP converges to a $\mathcal{O}\left(\sqrt{\frac{\delta}{\kappa_\gm}}\right)$-Nash equilibrium of a zero-sum game $(\gm, -\gm)$ at a rate of $\mathcal{O}\left(\frac{1}{\kappa_\gm T}\right)$.
\end{theorem}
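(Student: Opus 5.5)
The plan is to turn the one-step contraction of \Cref{lem:decrease} into a recursion of the form $\psi_{t+1}\le \psi_t - c\,\kappa_\gm \psi_t^2$, writing $\psi_t=\psi(z_t)$. Such a recursion gives $1/\psi_{t+1}\ge 1/\psi_t + c\kappa_\gm$, hence $\psi_T = \mathcal{O}(1/(\kappa_\gm T))$. The recursion holds only while $\psi_t$ is large compared to $\delta$; this produces the additive $\mathcal{O}(\sqrt{\delta/\kappa_\gm})$ floor in the statement. The argument has three steps.

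\textbf{Step 1: a lower bound on a decreasing step.} Suppose at time $t$ the best responses $(i,j)$ satisfy the hypotheses of \Cref{lem:decrease}: either unique, or, after a tie-breaking step, linearly indistinguishable as guaranteed by \Cref{lem:tie} and \Cref{obs:stab}. I bound the largest $\theta$ for which $i$ stays a best response, i.e.\ the $\bar\theta$ of \Cref{lem:uniquestab}. From \eqref{eq:star}, $\bar\theta_i \ge b_i/|a_{ij}|$. Since entries lie in $[0,1]$, $|a_{ij}|\le 2$, and $b_i\ge \ell_x$ for every $i\notin\brset(y)$. Therefore $\bar\theta\ge \ell_x/2$, and symmetrically $\ge \ell_y/2$ for the column player.

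Moving the step along $[0,\min\{\ell_x,\ell_y\}/2]$ keeps both best responses valid. By \Cref{lem:decrease}, $\psi$ equals $(1-\eta)\psi_t$ there. The greedy minimiser over $[\delta,1]$ therefore achieves at least this decrease:
\[\psi_{t+1}\le \Big(1-\tfrac{\min\{\ell_x,\ell_y\}}{2}\Big)\psi_t .\]

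\textbf{Step 2: inserting the condition number.} By \Cref{def:cnum}, $\min\{\ell_x,\ell_y\}\le \kappa_\gm\psi_t$. This is the wrong direction for the recursion. So at this point I will read $\kappa_\gm$ as the instance-dependent quantity that makes $\min\{\ell_x,\ell_y\}\ge \kappa_\gm\psi_t$ along the trajectory, i.e.\ I would effectively use an infimum. Flagging this and working with $\min\{\ell_x,\ell_y\}\ge\kappa_\gm\psi_t$ gives
\[\psi_{t+1}\le \psi_t-\tfrac{\kappa_\gm}{2}\psi_t^2\]
on decreasing steps. Inverting yields $1/\psi_{t+1}\ge 1/\psi_t+\kappa_\gm/2$.

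\textbf{Step 3: handling tie-breaking steps.} When $\eta_t=\delta$ is forced, I pair step $t$ with step $t+1$ using \Cref{lem:bndinc}. Its requirement $\delta\le \eta\psi/4$ holds with $\eta\ge \kappa_\gm\psi/2$ whenever $\psi\ge \sqrt{8\delta/\kappa_\gm}$. Under that condition each two-step block still contracts, giving $\psi_{t+1}\le \psi_{t-1}-\tfrac{\kappa_\gm}{4}\psi_{t-1}^2$. This costs only a constant factor in the rate.

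Telescoping over blocks gives $\psi_T\le \mathcal{O}(1/(\kappa_\gm T))$ for as long as $\psi\ge\sqrt{8\delta/\kappa_\gm}$. Once $\psi$ falls below this threshold, convexity bounds any single step's increase by $2\delta$, and the next decreasing step brings $\psi$ back down. Hence $\psi$ stays $\mathcal{O}(\sqrt{\delta/\kappa_\gm})$ thereafter, which gives the approximate-equilibrium claim.

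\textbf{Main obstacle.} The hard part is Step 2, the orientation of $\kappa_\gm$. The definition uses a supremum, while the argument needs a uniform lower bound on $\min\{\ell_x,\ell_y\}/\psi$ along the iterates. The first thing I would try is to justify this bound along the AGFP trajectory using the strict complementarity remark. Failing that, I would state the result with the infimum over visited non-equilibrium points in place of $\kappa_\gm$.

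A secondary difficulty is that \Cref{lem:tie} needs $\delta$ small relative to the game, and that smallness must be reconciled with the threshold $\sqrt{8\delta/\kappa_\gm}$.
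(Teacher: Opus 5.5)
Your argument is the paper's argument. The paper's proof has the same steps:
\begin{itemize}
\item it takes $\eta_t\ge\bar\theta\ge\min\{\ell_x,\ell_y\}/2$, since the numerators of the $\bar\theta_i$ are at least $\min\{\ell_x,\ell_y\}$ and the denominators at most $2$;
\item it converts this into $\frac{\kappa_\gm}{2}\psi(z_t)$;
\item it combines that with \Cref{lem:bndinc} to get $\psi(z_{t+2})\le\psi(z_t)-\frac{\kappa_\gm}{4}\psi^2(z_t)$;
\item it reads the floor off the condition $\delta>\kappa_\gm\psi^2(z_t)$ under which \Cref{lem:bndinc} breaks.
\end{itemize}
Your Step 1 argues through optimality of the greedy step: $\psi(z_{t+1})\le\psi(z(\bar\theta))=(1-\bar\theta)\psi(z_t)$ once $\bar\theta\ge\delta$. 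This is slightly cleaner than the paper's pairing of $\eta_t\ge\bar\theta$ with $\psi(z_{t+1})=(1-\eta_t)\psi(z_t)$. That pairing tacitly requires the best responses to survive up to $\eta_t$ itself.

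The obstacle you flag in Step 2 is a genuine gap, and the paper does not close it. It writes $\frac{\min\{\ell_x,\ell_y\}}{2}=\frac{\kappa_\gm}{2}\psi(x_t,y_t)$, ``substituting from the definition''. With $\kappa_\gm$ a supremum, only $\le$ holds, which is the useless direction. So the paper's proof silently uses your infimum reading.

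Do not expect strict complementarity to turn this into a uniform constant: the infimum over all non-equilibrium profiles can be $0$. Take RPS with payoffs in $\{0,\frac12,1\}$ and $y=(\frac13+\varepsilon,\frac13-\varepsilon-\gamma,\frac13+\gamma)$. The payoffs of Rock and Paper against $y$ differ by only $\frac{3\gamma}{2}$, so $\ell_x\le\frac{3\gamma}{2}$. Meanwhile $\psi\ge\max_i e_i^\top\gm y-\frac12\ge\frac{\varepsilon}{2}$. Hence $\gamma\to0$ with $\varepsilon$ fixed drives $\min\{\ell_x,\ell_y\}/\psi$ to $0$. What your argument, and the paper's, actually proves is the rate with $\kappa_\gm$ replaced by the infimum of $\min\{\ell_x,\ell_y\}/\psi$ over the visited iterates, which is exactly your fallback.

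Two further holes come from the lemmas rather than from you.
\begin{itemize}
\item \Cref{lem:tie} gives linear indistinguishability with respect to $y_t$ and the previous direction $j$. But \Cref{obs:stab} must be applied with the new best response $j'\in\brset(x_t)$ used at the next step. So the claim that a tie-breaking step is followed by a decreasing one needs an extra argument when $j'\neq j$, and the algorithm does not enforce $j'=j$.
\item The smallness of $\delta$ in \Cref{lem:tie} depends on $y_{t-1}$, which is itself generated with that $\delta$. Your secondary concern is therefore justified. The paper only gestures at it through the adaptive-$\delta_t$ corollary.
\end{itemize}
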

\begin{proof}
We begin with the rate part of the proof, and relate $\eta_t$ to $\psi(z_t)$. We invoke \Cref{lem:uniquestab} or \Cref{obs:stab} to obtain $\bar{\theta}_x, \bar{\theta}_y$ for the two players, and let $\bar{\theta} = \min\{\bar{\theta}_x, \bar{\theta}_y\}$. Note that the nominators of both $\bar{\theta}_x, \bar{\theta}_y$ can be lower bounded by $\min\{\ell_x, \ell_y\}$, as defined above, and the denominators can be upper bounded by $2$. Hence, it holds that
\[
\eta_t \ge \bar{\theta} \ge \frac{\min\{\ell_x, \ell_y\}}{2} = \frac{\kappa_\gm}{2} \psi(x_t,y_t)
\]
where for the equality we substituted from \Cref{def:cnum}. 

Combining this bound with \Cref{lem:bndinc} we reach 
\begin{align*}
\psi(z_{t+2}) &\le \psi(z_{t}) - \frac{\kappa_\gm}{4}\psi^2(z_t).
\end{align*}
This recurrence is standard and gives us the claimed rate. For the approximation part, note that \Cref{lem:bndinc} breaks when $\delta > \eta_t \psi(z_t) \implies \delta > \kappa_\gm \psi^2(z_t)$, concluding the proof.
\end{proof}

\begin{corollary}[Adaptive $\delta$s] There is an adaptive choice $\delta_t$, i.e. one $\delta$ per iteration, such that the algorithm maintains the same convergence rate asymptotically.
\end{corollary}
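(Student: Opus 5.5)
The plan is to replace the fixed $\delta$ by a per-iteration $\delta_t$ that is small enough for every lemma to apply at every step. The only place in the proof of \Cref{th:main} where the fixed $\delta$ creates a bottleneck is \Cref{lem:bndinc}, which needs $\delta \le \eta_{t-1}\psi(z_{t-1})/4$. That bound fails once $\kappa_\gm \psi^2(z_t) < \delta$, and this failure is exactly what produces the $\mathcal{O}(\sqrt{\delta/\kappa_\gm})$ floor. So I would set
\[
\delta_t = c\,\min\{\eta_{t-1}\psi(z_{t-1}),\ \hat\theta_t\},
\]
with a small absolute constant $c$ (say $c=1/4$). Here $\hat\theta_t>0$ is the threshold from \Cref{lem:tie}, which makes best responses at the current point either unique or linearly indistinguishable. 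All quantities in $\delta_t$ are computable from the current iterate before step $t$ is taken.

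The argument then goes in four steps.

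\textbf{First}, I check that \Cref{lem:tie} and \Cref{lem:uniquestab}/\Cref{obs:stab} only require $\delta$ below a threshold that depends on the current point. Their proofs are pointwise: the quantities $\bar\theta_i(i^*)$ are explicit ratios of payoff differences at $y_{t-1}$. Hence the bound $\delta_t \le \hat\theta_t$ is enough.

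\textbf{Second}, I redo \Cref{lem:bndinc} with $\delta_{t+1}$ in place of $\delta$. Convexity gives $\psi(z_{t+1}) \le \psi(z_t) + 2\delta_{t+1}$. Combined with $\psi(z_t)=(1-\eta_{t-1})\psi(z_{t-1})$ and $\delta_{t+1}\le \eta_{t-1}\psi(z_{t-1})/4$, this yields $\psi(z_{t+1})\le (1-\eta_{t-1}/2)\psi(z_{t-1})$. This holds at every iteration, since the condition is now built into $\delta$ rather than failing once $\psi$ becomes small.

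\textbf{Third}, I plug in $\eta_{t-1}\ge \frac{\kappa_\gm}{2}\psi(z_{t-1})$, exactly as in \Cref{th:main}. This gives the recurrence $\psi(z_{t+2})\le \psi(z_t)-\frac{\kappa_\gm}{4}\psi^2(z_t)$ for all $t$, with no floor. The standard argument then follows: take reciprocals, use $1/(a-ba^2)\ge 1/a + b$, and sum. This gives $\psi(z_{2k})\le 4/(\kappa_\gm k)$, i.e.\ the rate $\mathcal{O}(1/(\kappa_\gm T))$. The odd iterates are handled by the same bound applied from $z_1$.

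The main obstacle is the \textbf{circularity} in this construction. The bound $\delta_t \le \eta_{t-1}\psi(z_{t-1})/4$ uses past quantities, which is fine. However, \Cref{lem:tie} asks $\delta_t$ to be small relative to the gaps between best and non-best responses at $y_{t-1}$. Those gaps could be much smaller than $\eta\psi$, and they are not controlled by $\kappa_\gm$ from below in general. I would handle this by taking the minimum with $\hat\theta_t$ as above. I would then argue that this minimum does not hurt the rate, because the rate is driven by the decreasing steps, whose size $\eta_t$ is chosen greedily and does not depend on $\delta_t$. The remaining care point is that $\delta_t\to 0$ while staying positive; this is what "asymptotically" in the corollary allows. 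A positive $\delta_t$ still forbids getting stuck at $\eta=0$, so every tie-breaking step is followed by a decreasing one.
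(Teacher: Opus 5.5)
Your proposal is essentially the paper's own argument. You pick each $\delta_t$ from data available before step $t$ (the pointwise threshold from \Cref{lem:tie} at $z_{t-1}$ and the bound $\eta_{t-1}\psi(z_{t-1})/4$ from \Cref{lem:bndinc}), so the two-step recurrence of \Cref{th:main} holds at every iteration and the $\mathcal{O}(\sqrt{\delta/\kappa_{\gm}})$ floor never arises. You make the paper's brief sketch explicit (a concrete formula for $\delta_t$ and the solved recurrence), and, like the paper, you assume the lower bound $\eta_t \ge \frac{\kappa_{\gm}}{2}\psi(z_t)$ from \Cref{th:main} carries over unchanged as $\delta_t \to 0$.
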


\begin{proof}
    The conditions for $\delta_t$ in order for \Cref{th:main} to hold are determined by the proofs of \Cref{lem:tie} and \Cref{lem:bndinc}.  
    All necessary quantities can be determined by the previous iteration, e.g., \Cref{lem:bndinc} imposes an upper bound dependent on $\eta_{t-1}\psi(z_{t-1})$. By setting each $\delta_t$ to respect these bounds, we can guarantee the convergence holds for any horizon T. 
\end{proof}

\section{Experiments}
\label{sec:exp}

Given the rising popularity of Fictitious Play in practical applications, we conclude the main body of our paper with a series of experiments that demonstrate the effectiveness of our approach. 

\paragraph{Stepsize selection and implemention details.} The problem of optimizing the duality gap along the direction specified by a pair of best responses corresponds to minimizing a convex and piecewise linear function. This can be solved exactly in $\mathcal{O}(m + n)$ time, since there are at most $m$ (resp. $n$) changes for the $x$ (resp. $y$) player. Despite being of polynomial time, this would cause a significant time overhead in high dimensional games since the computation of each part requires a matrix-vector multiplication. Instead, we follow an approach similar to that of \cite{DBLP:conf/ijcai/FasoulakisMRS25}, who perform a ternary search since the function is unimodal. We take their approach one step further, and perform a binary search directly on the slopes of the pieces. That method solves up to machine accuracy (roughly $10^{-16}$ for 64bit machines) in about 50 iterations. More importantly, the approach also suits our goal of satisfying $\eta_t \ge \delta$, by setting the termination criterion to accuracy $\delta$. Finally, all the algorithms of this section were developed in Python using the library NumPy and tested on a home computer with an Apple M4 chip and 16GB RAM. 

\paragraph{The effect of $\delta$.} In our original experiment on the Rock-Paper-Scissor game we set $\delta = 10^{-8}$, which can also be deduced by the ``small steps'' in \Cref{fig:agf_rps_steps} (due to binary search, those were $2^{-28}$). To show that the algorithm indeed gets stuck in approximate equilibrium point, we repeat the experiment with $\delta = 10^{-4}$. The results are presented in \Cref{fig:rps_small_d}. Notice how the duality gap plateaus a bit under $10^{-2} = \mathcal{O}(\sqrt{\delta})$. This phenomenon aligns with the statement of \Cref{th:main}, when not accounting for the constant $\kappa_\gm$.

\begin{figure}[H]
    \centering
    \includegraphics[scale=.7]{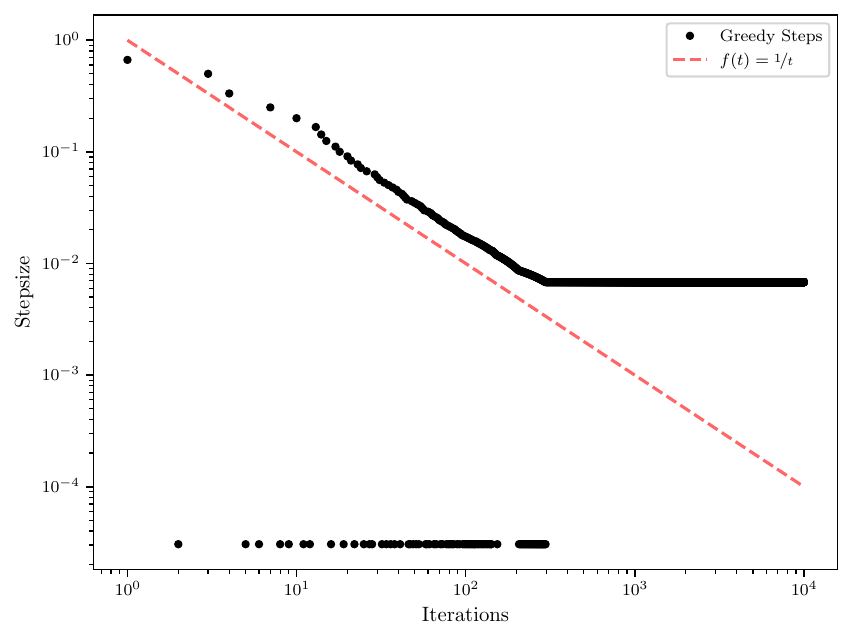} 
    \caption{AGFP in the RPS game with $\delta = 10^{-4}$}
    \label{fig:rps_small_d}
\end{figure}

\paragraph{Performance in random games.} We move forward by testing our method in randomly generated games. More specifically, we sample the entries of the matrix independently from a standard Gaussian distribution and normalize them in $[0,1]$. Since we have already tested the ``small'' Rock-Paper-Scissors game, we first test at a moderate game of size $50 \times 50$ using $\delta = 10^{-8}$. The results are presented in \Cref{fig:random50}. At first sight, it seems that the convergence of $\mathcal{O}\big(\frac{1}{T})$ fails. However, upon closer inspection, we can see that the line gets parallel with the previous reference, until we approach the point of plateauing (around $10^{-4}$). 

\begin{figure}[H]
    \centering
    
    \begin{subfigure}[b]{0.45\textwidth}
        \centering
        \includegraphics[width=\textwidth]{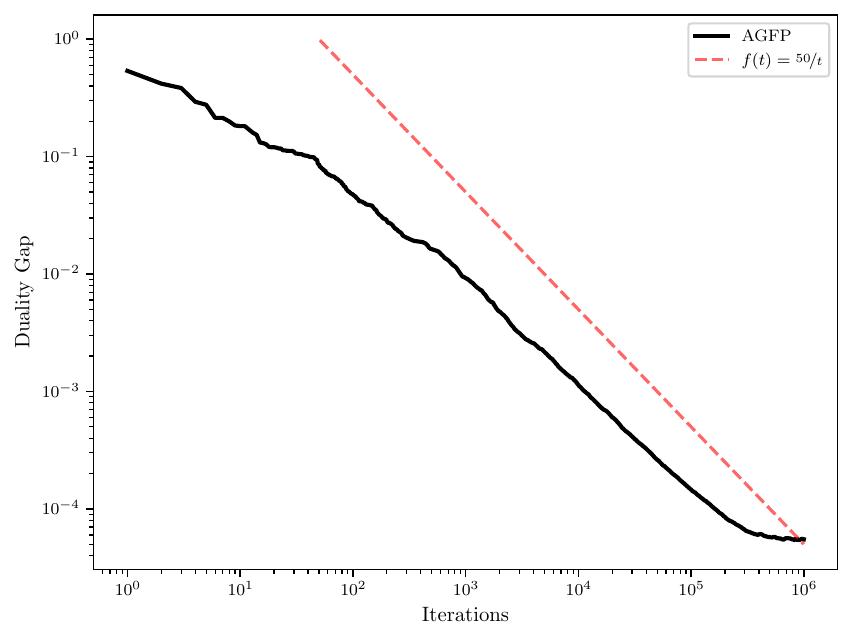}
        \caption{Duality gap per iteration}
    \end{subfigure}
    \hfill
    \begin{subfigure}[b]{0.45\textwidth}
        \centering
        \includegraphics[width=\textwidth]{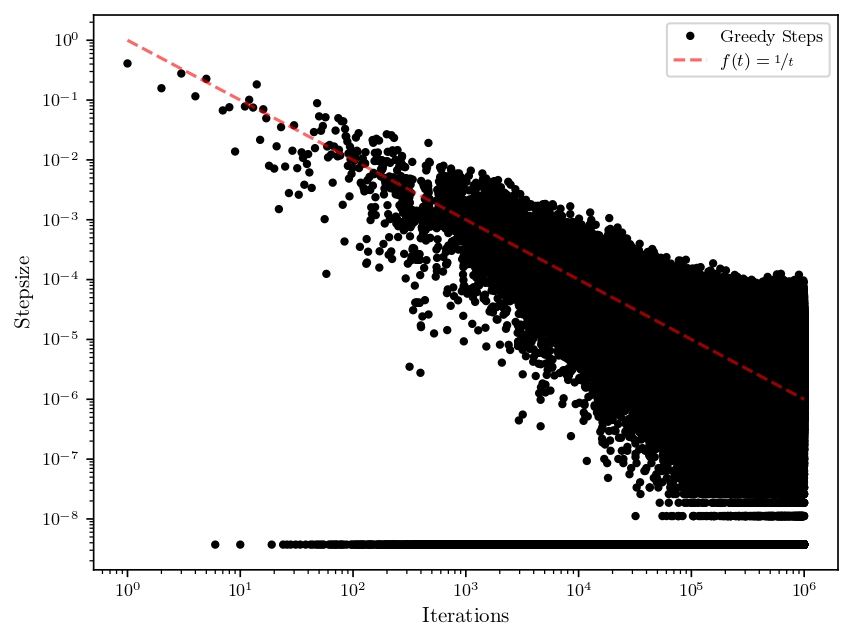}
        \caption{Step per iteration}
    \end{subfigure}
    
    \caption{AGFP in a $50\times 50$ random Gaussian game.}\label{fig:random50}
\end{figure}

Before exploring further the effect of the dimension in the dynamics, we have to comment about the optimal steps in the random case. We observe that the nice picture of the Rock-Paper-Scissors games changes: the stepsizes are still decreasing, but they do not fit the line $\nicefrac{1}{t}$ well. Still, it seems as they are distributed around it.

\paragraph{The effect of the dimension.} In the Rock-Paper-Scissors game, which is $3 \times 3$, we experimentally observed the rate to be $\frac{1}{T}$. Then, in the $50 \times 50$ random Gaussian game we observe $\frac{50}{T}$. We increase the dimension of the experiment by an order of magnitude and test our method on a $500 \times 500$ Gaussian game, setting $\delta = 10^{-11}$ and increasing the number of iterations to get a clearer image. Indeed, in \Cref{fig:random_500} we once again see a rate that scales as $\mathcal{O}\big(\frac{n}{T}\big)$. While all the illustrations presented here are for square matrices, we observed that in the non-square case, the duality gap scales with the maximum of the dimension.
\vspace{-.1cm}
\begin{figure}[H]
    \centering
    \includegraphics[scale=.6]{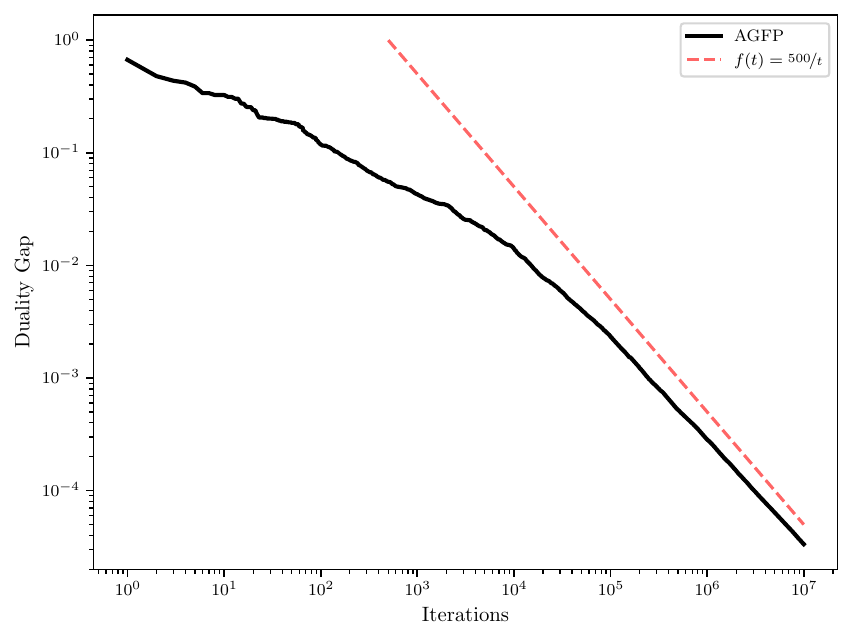} 
    \caption{AGFP in a $500\times 500$ random Gaussian game}
    \label{fig:random_500}
\end{figure}

\paragraph{Wall-clock time.}
Standard Fictitious Play is observed to converge at a rate  $\mathcal{O}\big(\frac{1}{\sqrt{T}}\big)$. Our proposed method improves this rate, but at the expense of increased computational overhead per iteration. To assess whether this trade-off yields a net benefit in practice, we benchmark the wall-clock time-to-accuracy performance of both algorithms. The results are presented in \Cref{fig:time} for a $50 \times 50$ Gaussian game with $\delta = 10^{-10}$. We observe that already at an approximation of $10^{-3}$ our algorithm is faster, while for $10^{-4}$ the difference is striking. The classical algorithm requires around 195 seconds, while our method just over 4\footnote{For an equal base of comparison we run the experiment twice, once to collect the figure's data and one to report the times, to avoid penalizing FP for the increased memory access.}
, a $50\times$ improvement. 

\begin{figure}[H]
    \centering
    \includegraphics[scale=.7]{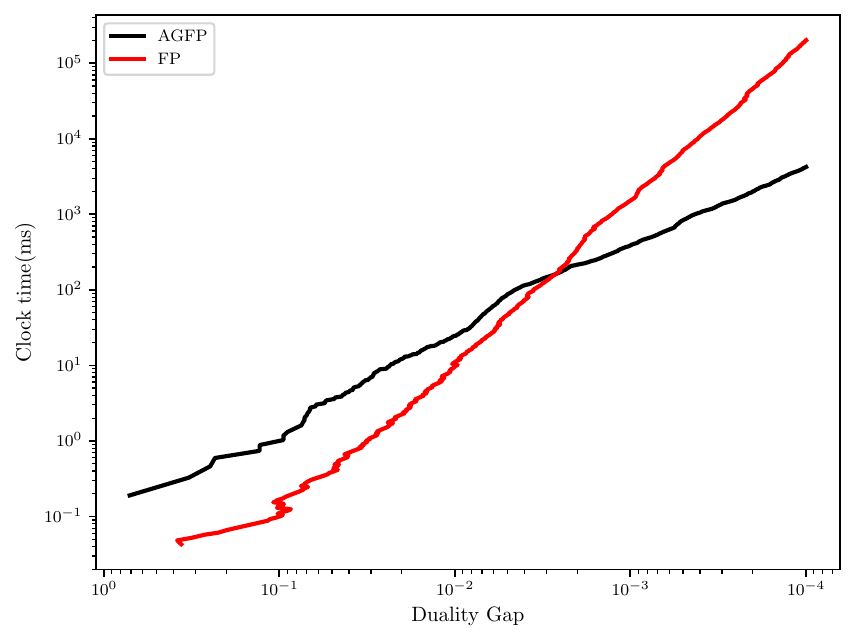} 
    \caption{Time comparison between FP and AGFP in a Gaussian random $50\times 50$ game}
    \label{fig:time}
\end{figure}

\section{Discussion and Future Work}
We began our exploration with the Rock-Paper-Scissors game. It is only natural to conclude with it. We start with the standard $3\times 3$ game. Note that under our symmetric and pure initialization, the only time where the best response was unique was at the first step. At any other time, two of three strategies yielded the same playoff, since all three can only occur at the equilibrium. We conjecture that if the algorithm had memory, then the tie-breaking where it picks a different best response than the previous iteration converges at the rate of $\frac{1}{T}$ with the stepsize sequence $\eta^*_1 = \nicefrac{2}{3}$ and $\eta^*_t = \nicefrac{1}{t},\ t\ge 2$.

Now, the next interesting question is whether that sequence is the optimal one for RPS in higher dimensions. The immediate answer is negative, but there is more to that. In \Cref{fig:rps17} we present the steps the algorithm selects for an RPS game in higher dimension, where we clipped the noisy steps due to $\delta = 10^{-11}$. We still see a decreasing line that is close to $\nicefrac{1}{t}$ but now there are two increasing straight lines as well. We do not have a theoretical justification for this phenomenon and leave it as an open question. Another avenue for future work is to study our algorithm, and especially the condition-like number, in the smooth regime. \cite{DBLP:conf/nips/AnagnostidesS24} provided a similar treatment to gradient-based algorithms. 
Finally, we must note that interesting pursuits are to adapt our idea to non zero-sum games. It seems especially promising to explore the direction of potential games. Since those games admit a pure Nash equilibrium, once the best response direction is this point, our algorithm will jump to it.

Finally, we turn to Karlin's conjecture and our constant $\kappa$. The conjecture is still open for random games. We think that establishing an unconditional convergence of our method for random games via bounding the condition-like number can act as an intermediate step to attack the conjecture, by measuring the duality gap increases of the fixed step schedule to the optimal one.

\begin{figure}
    \centering
    \includegraphics[scale=.7]{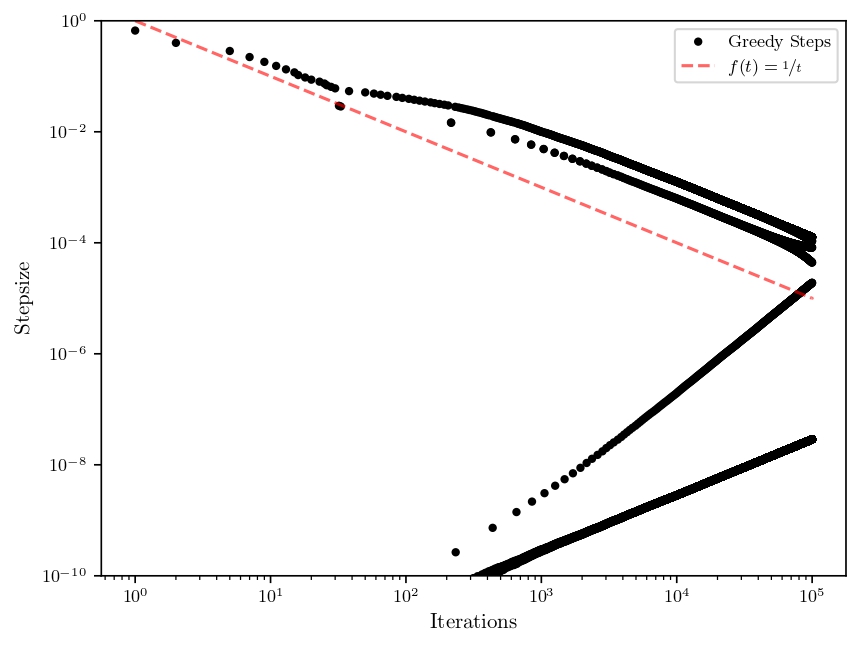}
    \caption{AGFP steps in the $17 \times 17$ RPS game}
    \label{fig:rps17}
\end{figure}

\section*{Acknowledgments}
We would like to thank Michail Fasoulakis, Ioannis Kakatelis and Nikoleta Sevastaki for useful discussions. This work was partially supported by the framework of the H.F.R.I call “Basic research Financing (Horizontal support of all Sciences)” under the National Recovery and Resilience Plan “Greece 2.0” funded by the European Union – NextGenerationEU (H.F.R.I. Project Number: 15877) and by the project MIS 5154714 of the National Recovery and Resilience
Plan “Greece 2.0” funded by the European Union under the NextGenerationEU Program.

\bibliographystyle{plainnat}
\bibliography{references}

\end{document}